\newtheorem{theorem}{Theorem}[section]
\newtheorem*{theorem*}{Theorem}
\newtheorem{proposition}{Proposition}[section]
\newtheorem*{proposition*}{Proposition}
\theoremstyle{definition}
\newtheorem{definition}{Definition}[section]
\newtheorem*{definition*}{Definition}
\newtheorem{problem}{Problem}[section]
\newtheorem*{problem*}{Problem}
\newtheorem{example}{Example}[section]
\newtheorem*{example*}{Example}
\newtheorem{assumption}{Assumption}[section]
\newtheorem*{assumption*}{Assumption}
\newtheorem*{remark*}{Remark}
\newcommand{\mvec}[1]{\underline{{#1}}}
\title{An Improvement of a Key Exchange Protocol Relying on Polynomial Maps}
\author{Keita Suzuki${}^{1}$ \and Koji Nuida${}^{23}$}
\date{%
${}^1$ Graduate School of Information Science and Technology, The University of Tokyo, Japan \\
\url{keita-suzuki391@g.ecc.u-tokyo.ac.jp} \\
${}^2$ Institute of Mathematics for Industry (IMI), Kyushu University, Japan \\
\url{nuida@imi.kyushu-u.ac.jp} \\ 
${}^3$ National Institute of Advanced Industrial Science and Technology (AIST), Japan%
}
\begin{document}
\maketitle

\begin{abstract}
Akiyama et al.\ (Int.~J.~Math.~Indust., 2019) proposed a post-quantum key exchange protocol that is based on the hardness of solving a system of multivariate non-linear polynomial equations but has a design strategy different from ordinary multivariate cryptography.
Their protocol has two versions, an original one and a modified one, where the modified one has a trade-off that its security is strengthened while it has non-zero error probability in establishing a common key.
In fact, the evaluation in their paper suggests that the probability of failing to establish a common key by the modified protocol with the proposed parameter set is impractically high.
In this paper, we improve the success probability of Akiyama et al.'s modified key exchange protocol significantly while keeping the security, by restricting each component of the correct common key from the whole of the coefficient field to its small subset.
We give theoretical and experimental evaluations showing that our proposed parameter set for our protocol is expected to achieve both failure probability $2^{-120}$ and $128$-bit security level.
\end{abstract}

\section{Introduction}

Cryptographic schemes are mainly classified into public key cryptography and symmetric key cryptography.
In symmetric key encryption schemes, both the sender and the receiver of a message use a common key, therefore they have to share the common key in advance.
Public key cryptographic schemes used for this purpose are called key exchange protocols.
The famous key exchange protocols include RSA-based one \cite{RSA1} and Diffie--Hellman key exchange \cite{Diffie}.

For the widely used public key cryptosystems at the present such as RSA cryptosystem \cite{RSA2} and elliptic curve cryptosystems \cite{elliptic,elliptic2}, it has been known that Shor's quantum algorithm \cite{Shor} can break these schemes in polynomial time.
This means that those cryptosystems will be vulnerable once a large-scale quantum computer is developed in future.
As a countermeasure, post-quantum cryptography (PQC) has been studied intensively, which is the class of (public key) cryptographic schemes secure even against quantum algorithms.
NIST has started the standardization process for PQC since 2017, and a lot of candidates have been submitted.
Among them, one of the main candidates for PQC is multivariate cryptography, which is based on the hardness of solving a system of multivariate non-linear equations.

In 2019, Akiyama et al.~\cite{Akiyama} proposed a key exchange protocol which is based on the hardness of solving a system of multivariate non-linear equations but yet has a design strategy different from ordinary multivariate cryptography.
Namely, ordinary multivariate cryptography intends to conceal the internal structure by using compositions of maps, while Akiyama et al.'s scheme uses both composition and addition of maps to conceal the central map $\mvec{\psi}$.
Their construction improved a previous protocol \cite{Yosh} in reducing the ciphertext size.
Now we note that their first construction of the proposed protocol has a restriction that the central map $\mvec{\psi}$ must be injective, which decreases the number of candidates for $\mvec{\psi}$ and hence weakens the security.
As a countermeasure, they also proposed a modified construction of the protocol that enables to use non-injective central map $\mvec{\psi}$ and hence strengthens the security.
However, this modification yields a trade-off that the modified protocol has non-zero error probability in establishing a common key (in contrast to the original version where such an error does not occur), and the error probability of the modified protocol is expected as high.
In fact, for the experiments of the modified protocol performed in the original paper \cite{Akiyama}, the highest average success rate with their proposed parameters was only $89.9\%$.
In this paper, we focus on the modified version of the protocol in \cite{Akiyama} and aim at resolving the high error probability to make the scheme practically useful.

\subsection{Our Contributions}

In this paper, we give an improvement of the key exchange protocol by Akiyama et al.~\cite{Akiyama}, more precisely its modified version described in that paper mentioned above, in reducing the failure probability while keeping the security level.
Roughly speaking, in the modified version of Akiyama et al.'s protocol, one of the two parties solves a certain system of polynomial equations defined over a prime field $\mathbb{F}_q$ in order to determine the common key.
The main reason of the larger failure probability is that the system of equations frequently has two or more solutions in $\mathbb{F}_q^n$ and therefore the correct common key is not uniquely determined.
Our main idea is to restrict the range of the correct common key (i.e., the correct solution of the system of equations) into $\mathbb{Z}_p^n \subseteq \mathbb{F}_q^n$ for smaller $p < q$ instead of the whole of $\mathbb{F}_q^n$.
(This idea is inspired by the construction of \cite{Akiyama2}.)
Even if the system of equations has multiple solutions in $\mathbb{F}_q^n$, the solution in $\mathbb{Z}_p^n$ will be unique with high probability, which enables the party to successfully determine the correct common key.
Our proposed protocol with sizes of parameters $q$ and $n$ similar to those of \cite{Akiyama} indeed reduces the failure probability significantly.
Moreover, we propose a parameter set for our proposed protocol that might achieve both failure probability $2^{-120}$ and $128$-bit security level.
We theoretically estimate an upper bound for the failure probability in order to confirm the former property, and discuss the security level against some typical kinds of attacks (Gr\"{o}bner basis computation, linear algebraic attack, etc.) in order to confirm the latter property.

\subsection{Organization of This Paper}

Section \ref{sec:preliminaries} summarizes some notations and basic properties such as properties of Gr\"{o}bner basis.
In Section \ref{sec:previous_scheme}, we recall the construction of the protocol in \cite{Akiyama} and summarize its advantages and disadvantages.
In Section \ref{sec:proposed_protocol}, we describe our proposed protocol and give a theoretical estimate of an upper bound for failure probability of the protocol.
Section \ref{sec:experimental_results} summarizes the results of our computer experiments about our proposed protocol.
Section \ref{sec:comparison} summarizes the results of comparison between our proposed protocol and the protocol in \cite{Akiyama}.
Finally, in Section \ref{sec:security_evaluation}, we evaluate the security of our proposed protocol.

\section{Preliminaries}
\label{sec:preliminaries}

In this section, we recall algebraic definitions needed to understand the algorithms, and introduce PME problem, which is the base of the security of the existing method.

\subsection{Notation}

Let $\mathbb{F}_q$ denote the finite field of $q$ elements and $\mvec{x} := (x_1, \dots, x_n)$ for $n$ variables $x_1,\dots,x_n$.
Then, $f(\mvec{x})$ expresses a polynomial map of $n$ variables.
We refer to $\mathbb{F}_q[\mvec{x}]$ as the polynomial ring of $n$ variables on coefficient ring $\mathbb{F}_q$, and express $(\psi_1(\mvec{x}),\dots,\psi_m(\mvec{x}))$ as $\mvec{\psi}(\mvec{x})$ for polynomial maps of $n$ variables $\psi_1(\mvec{x}),\dots,\psi_m(\mvec{x})$.
Then we define the degree of $\mvec{\psi}$ by $\deg \mvec{\psi} := \max\{\deg \psi_1,\dots,\deg \psi_n\}$.
In particular, polynomial maps of degree one are called affine maps.

For polynomial maps $\mvec{\psi}=(\psi_1,\dots,\psi_m)$ of $n$ variables and $\mvec{\phi}=(\phi_1,\dots,\phi_n)$, we define the composed map $\mvec{\psi}\circ \mvec{\phi}$ by $\mvec{\psi}\circ \mvec{\phi} = (\psi_1(\phi_1,\dots,\phi_n),\dots,\psi_m(\phi_1,\dots,\phi_n))$.
In this paper, we use the following classes of polynomials:
\[
\begin{split}
\Lambda_{n,d} &:= \{f \in \mathbb{F}_q[x_1, \dots, x_n] \mid \deg f = d\} \enspace,\\
\Lambda_{n,d}^m &:= \{\mvec{f} \in \mathbb{F}_q[x_1, \dots, x_n]^m \mid \deg \mvec{f} = d\} \enspace,\\
(\Lambda_{n,d}^n)^* &:= \{\mvec{\psi} \in \Lambda_{n,d}^n \mid \mvec{\psi} \mbox{ is injective}\} \enspace.
\end{split}
\]
Moreover, $\xleftarrow{r}S$ represents that we choose randomly from the set $S$.

\subsection{PME Problem}
\label{subsec:PME_problem}

PME problem was introduced by Akiyama et al.~\cite{Akiyama}, which is the base of the security of their previous method.
The definition is as follows.

\begin{definition}
[PME problem]
For $(f(\mvec{x}), c_1(\mvec{x}), \dots, c_n(\mvec{x})) \in \mathbb{F}_q[\mvec{x}]^{n+1}$ and $(u_1, \dots, u_n) \in \mathbb{F}_q^n$, 
PME (Polynomial Map Equation) problem is a problem of finding a solution to the system of multivariate polynomial equations
\begin{eqnarray*}
  \begin{cases}
    f(x_1, \cdots, x_n) = 0 \\
    c_1(x_1, \cdots, x_n) = u_1\\
    \qquad\vdots\\
    c_n(x_1, \cdots, x_n) =u_n \enspace.\\
  \end{cases}
\end{eqnarray*}
\end{definition}

\subsection{Gr\"{o}bner Basis}

Here we summarize some basics for Gr\"{o}bner basis.
See e.g., \cite{text} for the details.

First, we define the monomial term set $\mathcal{M}_n$ of polynomial ring $K[\mvec{x}]$ on coefficient field $K$:
\begin{eqnarray*}
    \mathcal{M}_n:=\{x_1^{\alpha_1}\cdots x_n^{\alpha_n}\mid(\alpha_1, \cdots, \alpha_n) \in (\mathbb{Z}_{\geq 0})^n\} \enspace.
\end{eqnarray*}

\begin{definition}
[monomial order]
For a strict partial order $\succ$ on $\mathcal{M}_n$, $\succ$ is called monomial order when it satisfies the following three conditions.
\begin{enumerate}
\item
$\succ$ is a total order.
\item
Any subset $S \neq \emptyset$ of $\mathcal{M}_n$ has the minimum element about $\succ$.
\item
If $s,t \in \mathcal{M}_n$ and $s \succ t$, then for any element $u$ of $\mathcal{M}_n$, we have $su \succ tu$.
\end{enumerate}
\end{definition}

\begin{example}
[lexicographical order]
We define the lexicographical order $\succ_{lex}$ on $\mathcal{M}_n$ by
\[
x_1^{\alpha_1} \cdots x_n^{\alpha_n} \succ_{lex} x_1^{\beta_1} \cdots x_n^{\beta_n} \Leftrightarrow 1 \leq \exists i \leq n \mbox{ s.t.\ } \alpha_j = \beta_j \mbox{ ($1 \leq j \leq i-1$) and } \alpha_i > \beta_i \enspace.
\]
Then it is known that $\succ_{lex}$ is a monomial order.
\end{example}

\begin{definition}
[leading term]
Let $f \in K[\mvec{x}]$.
The leading term of $f$ is the maximum monomial (with coefficient) in $f$ with respect to a given monomial order $\succ$, and it is expressed as $LT_{\succ}(f)$.
\end{definition}

Then the following holds about division of multivariate polynomials.

\begin{proposition}
Let $\mathcal{G}=\{g_1,\dots,g_s\}$ be a finite subset of $K[\mvec{x}]\setminus \{0\}$.
For each $f \in K[\mvec{x}]$, there exist $h_1,\dots,h_s,r \in K[\mvec{x}]$ satisfying the following:
\begin{itemize}
\item
$f = h_1 g_1 + \cdots h_s g_s + r$.
\item
Each monomial appearing in $r$ is not divisible by $LT_{\succ}(g_i)$ for any $1 \leq i \leq s$.
\end{itemize}
\end{proposition}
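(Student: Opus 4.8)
The plan is to prove the statement constructively, by running the multivariate division algorithm and then arguing that it halts using only the axioms of a monomial order available from the definition above. Throughout I keep a ``work polynomial'' $p$, quotients $h_1,\dots,h_s$, and a remainder $r$, initialized as $p := f$ and $h_1 = \cdots = h_s = r := 0$, and I maintain two invariants: first, $f = h_1 g_1 + \cdots + h_s g_s + p + r$; second, no monomial appearing in $r$ is divisible by $LT_\succ(g_i)$ for any $i$. Both hold trivially at the start.

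The loop is as follows: while $p \neq 0$, inspect $LT_\succ(p)$. If there exists $i$ with $LT_\succ(g_i)$ dividing $LT_\succ(p)$ (choose, say, the least such $i$), set $t := LT_\succ(p)/LT_\succ(g_i)$ (a monomial with coefficient in $K$) and replace $h_i$ by $h_i + t$ and $p$ by $p - t\,g_i$; this cancels the term $LT_\succ(p)$, and since every monomial of $g_i - LT_\succ(g_i)$ is $\prec LT_\succ(g_i)$, axiom~3 of a monomial order gives that every monomial of $t\,(g_i - LT_\succ(g_i))$ is $\prec LT_\succ(p)$, so the leading term of the new $p$ is strictly $\prec$ the old one (or $p$ becomes $0$). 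Otherwise, $LT_\succ(p)$ is divisible by no $LT_\succ(g_i)$, so replace $r$ by $r + LT_\succ(p)$ and $p$ by $p - LT_\succ(p)$; again the leading term of $p$ strictly drops, and the monomial just added to $r$ is, by the case hypothesis, not divisible by any $LT_\succ(g_i)$, so the second invariant survives. In either branch the first invariant $f = \sum_i h_i g_i + p + r$ is visibly preserved.

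If the loop ever exits, then $p = 0$, and the invariants give exactly $f = h_1 g_1 + \cdots + h_s g_s + r$ together with the stated divisibility property of $r$. So it remains only to prove termination, which is the one genuinely nontrivial point. Each iteration produces a new $p$ whose leading monomial is strictly $\prec$ the previous one; if the algorithm ran forever we would obtain an infinite strictly $\succ$-descending chain $m_0 \succ m_1 \succ m_2 \succ \cdots$ in $\mathcal{M}_n$. But $\{m_0, m_1, m_2, \dots\}$ is a nonempty subset of $\mathcal{M}_n$, so by axiom~2 of a monomial order it has a $\succ$-minimum element $m_k$, contradicting $m_k \succ m_{k+1}$. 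Hence the loop halts after finitely many steps, and the proof is complete.

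\textbf{Expected main obstacle.} Everything except termination is routine bookkeeping; the crux is precisely the halting argument, and the key observation is that axiom~2 (every nonempty subset of $\mathcal{M}_n$ has a minimum) is exactly the well-ordering property needed to forbid an infinite descending chain of leading monomials. A secondary subtlety worth stating explicitly is the use of axiom~3 to guarantee that the reduction step $p \mapsto p - t\,g_i$ only introduces monomials strictly below $LT_\succ(p)$, so that the leading monomial genuinely decreases.
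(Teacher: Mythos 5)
Your proof is correct: it is the standard division-algorithm argument (maintain the invariant $f = \sum_i h_i g_i + p + r$, cancel or deposit the leading term, and use axiom~2 of the monomial order as a well-ordering to rule out an infinite strictly descending chain of leading monomials), and both the construction and the termination argument are sound. Note that the paper itself gives no proof of this proposition --- it is stated as a standard preliminary fact with a pointer to the textbook reference --- and your argument coincides with the classical proof found there, so there is nothing to reconcile beyond the minor (and harmless) implicit step that the non-leading terms of the old $p$ are already $\prec LT_{\succ}(p)$, so the leading monomial genuinely decreases in both branches.
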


In the above situation, we write the term $r$ as $\overline{f}^{\succ, \mathcal{G}}$ and call it remainder of $f$ about $\succ$ divided by $\mathcal{G}$.
Now Gr\"{o}bner basis is defined as follows.

\begin{definition}
[Gr\"{o}bner basis]
A finite subset $\mathcal{G}=\{g_1,\dots,g_s\}$ of $K[\mvec{x}] \setminus \{0\}$ is a Gr\"{o}bner basis of an ideal $\mathcal{I} \subseteq K[\mvec{x}]$ about $\succ$ when it satisfies the following condition: For any $f \in \mathcal{I} \setminus \{0\}$, $LT_{\succ}(g_i)$ divides $LT_{\succ}(f)$ for some $1 \leq i \leq s$.
\end{definition}

We note that this definition is equivalent to saying that for any $f \in \mathcal{I} \setminus \{0\}$, the remainder always satisfies that $\overline{f}^{\succ, \mathcal{G}} = 0$.

Hereinafter, we consider solving a system of multivariate non-linear polynomial equations using Gr\"{o}bner basis.
First, we formulate the problem as follows.

\begin{problem}
Given $s$ polynomials $f_1,\dots,f_s \in K[\mvec{x}]$ in $n$ variables $\mvec{x}$, find $(a_1,\dots,a_n) \in K^n$ such that $f_1(a_1,\dots,a_n)= \cdots =f_s(a_1,\dots,a_n)=0$.
\end{problem}

We reformulate this problem in a way that we can apply Gr\"{o}bner basis to the problem.

\begin{problem}
\label{prob:zero_set}
Let $\mathcal{I} \subseteq K[\mvec{x}]$ be the ideal generated by $f_1,\dots,f_s \in K[\mvec{x}]$.
Then find the zero set $V(\mathcal{I}) := \{(a_1,\cdots,a_n)\in K^n \colon f(a_1,\dots,a_n)=0 \mbox{ for any } f \in \mathcal{I}\}$.
\end{problem}

For Gr\"{o}bner basis about lexicographical order, the following property is useful.

\begin{theorem}
For lexicographical order $\succ_{lex}$ with $x_1 \succ x_2 \succ \cdots \succ x_n$, let $\mathcal{G}$ be a Gr\"{o}bner basis of an ideal $\mathcal{I} \subseteq K[\mvec{x}]$ about $\succ_{lex}$.
Then, for any $1 \leq \ell \leq n$, $\mathcal{G} \cap K[x_{\ell}, \dots,x_n]$ is a Gr\"{o}bner basis of the ideal $\mathcal{I} \cap K[x_{\ell}, \dots,x_n]$ of $K[x_{\ell}, \dots,x_n]$.
\end{theorem}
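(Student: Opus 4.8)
The plan is to prove the statement directly from the defining property of a Gr\"{o}bner basis, by showing that the set $\mathcal{G}_\ell := \mathcal{G} \cap K[x_\ell, \dots, x_n]$ satisfies the leading-term divisibility condition for the ideal $\mathcal{I}_\ell := \mathcal{I} \cap K[x_\ell, \dots, x_n]$ with respect to the monomial order $\succ_{lex}$ restricted to $\mathcal{M}_{n-\ell+1}$ (the monomials in $x_\ell,\dots,x_n$). First I would fix an arbitrary nonzero $f \in \mathcal{I}_\ell$. Since $f \in \mathcal{I}$ and $\mathcal{G}$ is a Gr\"{o}bner basis of $\mathcal{I}$, there is some $g_i \in \mathcal{G}$ with $LT_{\succ_{lex}}(g_i)$ dividing $LT_{\succ_{lex}}(f)$. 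The whole argument then reduces to verifying that this particular $g_i$ actually lies in $K[x_\ell,\dots,x_n]$, i.e.\ in $\mathcal{G}_\ell$; once that is established, the divisibility statement we need is exactly the one we already have.

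The key step — and the only place the specific structure of the lexicographical order enters — is the following observation: because $f$ involves only the variables $x_\ell, \dots, x_n$, every monomial appearing in $f$, in particular $LT_{\succ_{lex}}(f)$, involves only $x_\ell,\dots,x_n$. Now $LT_{\succ_{lex}}(g_i)$ divides $LT_{\succ_{lex}}(f)$, so $LT_{\succ_{lex}}(g_i)$ also involves only $x_\ell,\dots,x_n$. The crucial feature of $\succ_{lex}$ with $x_1 \succ \cdots \succ x_n$ is that if a monomial $m$ in $g_i$ contained any variable among $x_1,\dots,x_{\ell-1}$, then $m \succ_{lex} LT_{\succ_{lex}}(g_i)$ would follow (a higher power of an earlier variable always dominates), contradicting maximality of the leading term. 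Hence no monomial of $g_i$ involves $x_1,\dots,x_{\ell-1}$, which is to say $g_i \in K[x_\ell,\dots,x_n]$, and therefore $g_i \in \mathcal{G}_\ell$. This completes the verification of the Gr\"{o}bner basis condition.

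Two loose ends remain to be tied off. First, one should check that $\succ_{lex}$ restricted to $\mathcal{M}_{n-\ell+1}$ is again a monomial order on $K[x_\ell,\dots,x_n]$ — this is immediate from the definition, since all three axioms (total order, well-ordering, compatibility with multiplication) are inherited by the sub-poset. Second, one should confirm $\mathcal{G}_\ell \subseteq K[x_\ell,\dots,x_n]\setminus\{0\}$ is genuinely the intended ambient set and that $\mathcal{I}_\ell$ is indeed an ideal of $K[x_\ell,\dots,x_n]$, which is routine (intersection of an ideal with a subring is an ideal of that subring). I expect no serious obstacle here; the one subtlety worth stating carefully is the claim that under $\succ_{lex}$ any monomial containing an earlier variable outranks any monomial in the later variables alone — this is what makes the argument work and is false for, say, graded orders, so it deserves an explicit line in the write-up. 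Everything else is bookkeeping.
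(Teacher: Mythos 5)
Your proof is correct, and it is exactly the standard argument for the Elimination Theorem: the paper itself states this result without proof (deferring to the cited textbook of Cox--Little--O'Shea), and your route --- take $f\in\mathcal{I}\cap K[x_\ell,\dots,x_n]$, use the Gr\"obner property of $\mathcal{G}$ to get $g_i$ with $LT_{\succ_{lex}}(g_i)\mid LT_{\succ_{lex}}(f)$, then use the lex-specific fact that a leading term free of $x_1,\dots,x_{\ell-1}$ forces all of $g_i$ to be free of them --- is the same one used there. The only cosmetic remark: under the stronger textbook definition of a Gr\"obner basis (a subset of $\mathcal{I}$ generating it) you would add the one-line observations that $\mathcal{G}\cap K[x_\ell,\dots,x_n]\subseteq\mathcal{I}\cap K[x_\ell,\dots,x_n]$ and that the divisibility condition plus multivariate division yields generation, but with the definition as stated in this paper your verification already suffices.
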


This theorem enables us to reduce Problem \ref{prob:zero_set} to solving problems in less variables.
Moreover, when the ideal $\mathcal{I}$ is a zero-dimensional ideal (in the sense explained below), we can reduce the problem to a further easier one.

\begin{definition}
[zero-dimensional ideal]
An ideal $\mathcal{I} \subseteq K[\mvec{x}]$ is a zero-dimensional ideal when the quotient space $K[\mvec{x}] / \mathcal{I}$ is a finite-dimensional linear space over $K$.
\end{definition} 

A Gr\"{o}bner basis $\mathcal{G}= \{g_1,\cdots,g_s\}$ of a zero-dimensional ideal $\mathcal{I}$ about lexicographical order satisfies (with a certain ordering for the elements of $\mathcal{G}$) that for each $1 \leq i \leq n$, $g_i$ is a polynomial in $x_1,\dots,x_i$.
This fact makes it possible to solve the original system of equations by solving univariate non-linear equations finitely many times.

Buchberger's algorithm, $F_4$ algorithm \cite{F4}, and $F_5$ algorithm \cite{F5} are frequently used to calculate Gr\"{o}bner basis.
When the ideal in the problem is a zero-dimensional ideal, we can estimate the computational complexity of $F_5$, which is the best algorithm among them, by using the notion of degree of regularity explained below.

\begin{definition}
[degree of regularity]
We define the degree of regularity of a zero-dimensional ideal $\mathcal{I}=\langle f_1,\dots,f_s\rangle$ by
\begin{eqnarray*}
d_{reg} := \min\left\{ d\geq 0\mid \dim \{ f \in I \colon f \mbox{ is homogeneous of degree } d \} = \binom{n+d-1}{d} \right\} \enspace.
\end{eqnarray*}
\end{definition}

\begin{definition}
[$d$-regular]
For an overdetermined system of polynomial equations $f_1 = \cdots = f_s = 0$ ($s \geq n$) whose polynomials generate a zero-dimensional ideal, this equation system is $d$-regular when the following holds for any $1 \leq i \leq s$ and $g \in K[\mvec{x}]$:
\begin{center}
If $\deg(g) < d-\deg(f_i)$ and $gf_i \in \langle f_1,\dots,f_{i-1}\rangle$, then $g \in \langle f_1,\dots,f_{i-1}\rangle$.
\end{center}
\end{definition}

\begin{definition}
[semi-regular]
A system of polynomial equations is semi-regular when it is $d_{reg}$-regular.
\end{definition}

The following result is shown in \cite{Grobner}.

\begin{theorem}
\label{thm:complexity_of_F5}
For a semi-regular system of polynomial equations, the complexity of $F_5$ algorithm is estimated as
\begin{eqnarray*}
O\left({\binom{n+d_{reg}}{n}}^{\omega}\right)
\end{eqnarray*}
where $\omega<2.39$ denotes the linear algebra constant.
\end{theorem}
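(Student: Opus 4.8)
The statement is a classical complexity estimate for Gr\"obner basis computation via $F_5$, and the natural route is Lazard's reduction of Gr\"obner basis computation to linear algebra on Macaulay-type matrices. The plan is as follows. For a degree bound $D$, let $M_{\le D}$ be the matrix whose rows are the polynomials $x^{\alpha}f_i$ with $\deg(x^{\alpha}f_i)\le D$ (expanded in the monomial basis), and whose columns are indexed by the monomials of $K[\mvec{x}]$ of total degree at most $D$, ordered by $\succ$. It is known that putting $M_{\le D}$ into row echelon form and collecting the rows whose leading monomial is new (not already a leading monomial of a previously processed row) produces all Gr\"obner basis elements with leading term of degree $\le D$; in particular, for $D$ large enough this yields a full Gr\"obner basis of $\langle f_1,\dots,f_s\rangle$. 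The $F_5$ algorithm is essentially an organised way of building and reducing exactly these matrices degree by degree, the $F_5$ criterion discarding in advance the rows that would reduce to zero.

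First I would pin down the termination degree. Using the definition of a $d$-regular (hence semi-regular, i.e.\ $d_{reg}$-regular) system together with the hypothesis that $\mathcal{I}$ is zero-dimensional, the Hilbert series associated to the system coincides with the truncation of $\prod_{i=1}^{s}(1-t^{d_i})/(1-t)^{n}$ at its first non-positive coefficient, and $d_{reg}$ is by construction the least degree at which that coefficient vanishes. Consequently no ``degree fall'' occurs above $d_{reg}$: every new leading term of the Gr\"obner basis already shows up among the rows of $M_{\le d_{reg}}$, so processing degrees $d=1,2,\dots,d_{reg}$ suffices. I would combine this with the correctness property of the $F_5$ criterion (\cite{F5}): for a (semi-)regular sequence the criterion rejects precisely the trivial (Koszul) syzygies, so that $F_5$ performs \emph{no} reduction to zero. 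Hence the whole cost of $F_5$ is, up to constant factors, the cost of Gaussian elimination on the matrices $M_{\le d}$ for $d\le d_{reg}$.

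It then remains to estimate this cost. At degree $d$ the number of columns of $M_{\le d}$ is $\binom{n+d}{n}$, the number of monomials in $n$ variables of total degree $\le d$, and the number of rows is of the same order up to a factor polynomial in $n$ and $s$, each row being a monomial multiple of some $f_i$. Row-reduction of an $a\times b$ matrix costs $O(ab^{\omega-1})$ field operations, hence $O(N^{\omega})$ with $N$ the larger dimension; at $d=d_{reg}$ this is $O\!\left(\binom{n+d_{reg}}{n}^{\omega}\right)$. Summing over $d=1,\dots,d_{reg}$, the top-degree term dominates because the binomial coefficients grow with $d$, which gives the claimed bound with $\omega<2.39$. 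The main obstacle is the structural input in the middle step: rigorously establishing that a semi-regular system has no degree falls beyond $d_{reg}$ and that the $F_5$ criterion removes all and only the syzygies responsible for reductions to zero. This is exactly the semi-regularity analysis of \cite{Grobner}; once it is granted, the remaining monomial counting and linear-algebra estimates are routine.
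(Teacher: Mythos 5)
The first thing to note is that the paper does not prove this statement at all: Theorem~\ref{thm:complexity_of_F5} is quoted as a known result of Bardet--Faug\`ere--Salvy \cite{Grobner}, so there is no internal proof to compare against. Your sketch is the standard route behind that result (Lazard's reduction to row echelon computations on Macaulay-type matrices degree by degree, the $F_5$ criterion eliminating rows reducing to zero, termination at $d_{reg}$, and the cost $O(ab^{\omega-1})$ of eliminating an $a\times b$ matrix with $a,b$ of order $\binom{n+d_{reg}}{n}$), and as an outline it is accurate and consistent with how the cited paper argues. But be aware that as a \emph{proof} it is not self-contained: the two steps you flag as ``granted'' --- that for a semi-regular system no new leading terms (degree falls) arise beyond $d_{reg}$, and that the $F_5$ criterion removes exactly the trivial syzygies so that no reductions to zero occur up to that degree --- are precisely the content of \cite{Grobner}, so your argument, like the paper, ultimately rests on that reference rather than replacing it. Two small technical corrections: for a semi-regular (overdetermined) system, $d_{reg}$ is the index of the first \emph{non-positive} coefficient of the truncated series $\prod_i(1-t^{d_i})/(1-t)^n$, not the degree at which a coefficient ``vanishes''; and the relevant matrix at degree $d$ in the homogeneous analysis has columns indexed by monomials of exact degree $d$ (of cardinality $\binom{n+d-1}{d}$), with the bound $\binom{n+d_{reg}}{n}$ arising when one accounts for all degrees up to $d_{reg}$ (or for the affine case), so the monomial counting deserves a slightly more careful statement if you want the constant-free bound exactly as displayed.
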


\section{The Previous Protocol}
\label{sec:previous_scheme}

In this section, we summarize the protocol proposed by Akiyama et al.~\cite{Akiyama}.

\subsection{The Original Protocol}

Here we describe the original version of the protocol given in Section 4 of \cite{Akiyama}.
In the protocol, Alice and Bob are going to agree on a common key using a public channel.
We use the following parameters:
\begin{align*}
q \colon& \mbox{prime number which is the number of elements of the coefficient field}\\
n \colon& \mbox{number of variables}\\
m \colon& \mbox{degree of polynomials generated by Bob}\\
d \colon& \mbox{degree of polynomal generated by Alice}
\end{align*}
The protocol is as follows (see also Figure \ref{fig:previous_protocol}).
\begin{enumerate}
\item \label{previous_algorithm__Alice}
Alice sends a multivariate equation $f(\mvec{x})=0$ to Bob and keeps its solution $\mvec{\sigma} \in \mathbb{F}_q^n$ secret.
The detail is as follows:
\begin{enumerate}
\item
Generate a polynomial $f(\mvec{x}) \in \mathbb{F}_q[\mvec{x}]$ of degree $d$ uniformly at random.
\item
Generate a solution $\mvec{\sigma} \in \mathbb{F}_q^n$ of $f(\mvec{x})=0$ as follows.
First, generate $\sigma_1, \dots, \sigma_{n-1} \in \mathbb{F}_q$ uniformly at random.
Then, solve a univariate equation $f(\sigma_1, \dots, \sigma_{n-1}, x_n)=0$ in $x_n$ and keep a solution $\sigma_n$; if it cannot be solved, then modify the constant term and restart generating $\sigma_1, \dots, \sigma_{n-1}$.
\item
Keep the solution $\mvec{\sigma} = (\sigma_1, \dots, \sigma_{n-1}, \sigma_n) \in \mathbb{F}_q^n$.
\item
Send $f(\mvec{x})$ to Bob.
\end{enumerate}
\item
Bob sends multivariate polynomials $\mvec{g}(\mvec{x})$ and $\mvec{c}(\mvec{x})$ to Alice.
The detail is as follows:
\begin{enumerate}
\item
Generate a bijective affine map $\mvec{g}(\mvec{x}) = (g_1(\mvec{x}), \dots, g_n(\mvec{x})) \in \mathbb{F}_q[\mvec{x}]^n$ uniformly at random.
\item
Generate an injective polynomial map $\mvec{\psi}(\mvec{x}) = (\psi_1(\mvec{x}), \dots, \psi_n(\mvec{x})) \in \mathbb{F}_q[\mvec{x}]^n$ of degree $\deg \psi_j = m$ ($1 \leq j \leq n$) randomly.
\item
Compute $\mvec{\psi}(\mvec{g}(\mvec{x}))$.
\item
Generate a polynomial map $\mvec{r}(\mvec{x}) = (r_1(\mvec{x}), \dots, r_n(\mvec{x})) \in \mathbb{F}_q[\mvec{x}]^n$ of degree $\deg r_j = m-d$ ($1 \leq j \leq n$) uniformly at random.
\item
Compute a polynomial map $\mvec{c}(\mvec{x}) = \mvec{\psi}(\mvec{g}(\mvec{x})) + f(\mvec{x})\mvec{r}(\mvec{x})$.
\item
Send $\mvec{g}(\mvec{x})$ and $\mvec{c}(\mvec{x})$ to Alice.
\end{enumerate}
\item
Alice computes a common key $\mvec{s} \in \mathbb{F}_q^n$, and sends $\mvec{u} \in \mathbb{F}_q^n$ to Bob.
The detail is as follows:
\begin{enumerate}
\item
Compute $\mvec{g}(\mvec{\sigma}) = \mvec{s}$, and keeps $\mvec{s}$ as a common key.
\item
Compute $\mvec{c}(\mvec{\sigma}) = \mvec{u}$, and send $\mvec{u}$ to Bob.
\end{enumerate}
\item \label{previous_algorithm__Bob}
Bob computes a common key $\mvec{s}$ as follows.
Since $f(\mvec{\sigma})= 0$ implies $\mvec{c}(\mvec{\sigma}) = \mvec{\psi}(\mvec{g}(\mvec{\sigma})) = \mvec{u}$, Bob can compute the common key $\mvec{s}$ by applying $\mvec{\psi}^{-1}$ to $\mvec{u}$:
\begin{eqnarray*}
\mvec{\psi}^{-1}(\mvec{u}) = \mvec{g}(\mvec{\sigma}) =\mvec{s} \enspace.
\end{eqnarray*}
\end{enumerate}
\begin{figure}[t!]
\centering
\begin{tabular}{|lll|}
\hline
Alice      &         & Bob        \\ \hline
$\mvec{\sigma} \stackrel{r}{\leftarrow} \mathbb{F}_q^n$&         &            \\
$f \stackrel{r}{\leftarrow} \Lambda_{n,d}$&         &\\
$f(\mvec{\sigma}) = 0$ & $\xrightarrow{f}$& $\mvec{g} \stackrel{r}{\leftarrow} \Lambda_{n,1}^n$ \\
           &         & $\mvec{\psi} \stackrel{r}{\leftarrow} (\Lambda_{n,m}^n)^*$\\
           &         & $\mvec{r} \stackrel{r}{\leftarrow} \Lambda_{n,m-d}^n$\\
           &$\xleftarrow{\left(\mvec{g}, \mvec{c}\right)}$& $\mvec{c}:=\mvec{\psi}\circ\mvec{g}+f\mvec{r}$\\
$\mvec{s}:=\mvec{g}(\mvec{\sigma})$   &  & \\
$\mvec{u}:=\mvec{c}(\mvec{\sigma})$  &$\xrightarrow{\mvec{u}}$&  $\mvec{s}=\mvec{\psi}^{-1}(\mvec{u})$\\\hline
\end{tabular}
\caption{The original protocol}
\label{fig:previous_protocol}
\end{figure}

\subsection{The Modified Protocol in the Original Paper}
\label{subsec:previous_improvement}

In this section, we explain the improvement of the protocol above given in Section 4.2 of the original paper to increase the possibilities of multivariate polynomial map $\mvec{\psi}$.
In the original protocol, we restricted the polynomial map $\mvec{\psi}$ to be injective in order Bob to obtain the common key uniquely.
In contrast, here we use a general polynomial map $\mvec{\psi}$, and from the candidate set $\mvec{\psi}^{-1}(\mvec{u})$ of common keys, we exclude ones which do not satisfy the necessary condition $f = 0$.
Precisely, we change Step \ref{previous_algorithm__Bob} of the algorithm in the following manner:
\begin{itemize}
\item[(a)]
Compute the set $\mvec{\psi}^{-1}(\mvec{u})$.
\item[(b)]
If $\#\mvec{\psi}^{-1}(\mvec{u}) = 1$, then keep the $\mvec{s} \in \mvec{\psi}^{-1}(\mvec{u})$ as the common key and halt.
\item[(c)]
If $\#\mvec{\psi}^{-1}(\mvec{u}) \neq 1$, then compute all elements of $S := \{ \mvec{s} \in \mvec{\psi}^{-1}(\mvec{u})\mid f( \mvec{g}^{-1}(\mvec{s}) ) = 0\}$.
In other words, for each element $\mvec{s}$ of $\mvec{\psi}^{-1}(\mvec{u})$, check whether it satisfies that $f( \mvec{g}^{-1}(\mvec{s}) ) = 0$, and if not, exclude the $\mvec{s}$ from the set $S$.
\item[(d)]
If finally $\#S = 1$, then keep the element of $S$ as the common key; otherwise, restart from Step \ref{previous_algorithm__Alice}.
\end{itemize}

\subsection{Advantage of the Original Protocol}

The protocol constructs a ciphertext in a way different from the usual multivariate cryptography (using a central map and composing it with two affine maps), and consequently, there is a possibility to reduce the parameter size and the ciphertext size by avoiding known attacks to multivariate cryptosystems.
Also, the protocol was an improvement of Yosh's protocol \cite{Yosh} and succeeded in decreasing the degree of polynomials from exponential order to polynomial order, which improves the efficiency.

\subsection{Disadvantage of the Modified Protocol}
\label{sec:previous_protocol__disadvantage}

The improvement in Section \ref{subsec:previous_improvement} aimed at enhancing the security by enlarging the possibility of the map $\mvec{\psi}$.
However, even though an additional check using the condition $f(\mvec{\sigma})=0$ is introduced, there may be risk of failure of the protocol due to non-injectivity of $\mvec{\psi}$.
In fact, for the experiments of the protocol performed in the original paper \cite{Akiyama}, the highest average success rate with their proposed parameters was only $89.9\%$.
In contrast, practically desirable values of failure rates are of the order of $2^{-64}$ or even smaller.
Therefore, the success rate of the modified protocol has to be much improved.

\section{Our Proposed Protocol}
\label{sec:proposed_protocol}

\subsection{Protocol Description}

Similarly to \cite{Akiyama}, Alice and Bob are going to agree on a common key using a public channel.
In addition to the originally used parameters, we introduce parameters $p$ related to the range of the common key and $\ell$ determining the number of equations in $f$.
From now, we regard $\mathbb{Z}_p := \{0,1,\dots,p-1\} \subseteq \mathbb{F}_q$.
\begin{align*}
q \colon& \mbox{prime number which is the number of elements of the coefficient field}\\
p \colon& \mbox{integer related to the range of the common key}\\
n \colon& \mbox{number of variables}\\
m \colon& \mbox{degree of polynomials generated by Bob}\\
d \colon& \mbox{degree of polynomial generated by Alice}\\
\ell \colon& \mbox{number of polynomials generated by Alice}
\end{align*}
Our proposed protocol is as follows (see also Figure \ref{fig:proposed_protocol}):
\begin{enumerate}
\item \label{item:proposed_protocol_Alice}
Alice sends a system of multivariate polynomial equations $\mvec{f}(\mvec{x})=0$ to Bob, and keeps its solution $\underline{s}$ belonging to $\mathbb{Z}_p^n \subseteq \mathbb{F}_q^n$ as the common key.
The detail is as follows:
\begin{enumerate}
\item
Generate a uniformly random $\mvec{s} \in \mathbb{Z}_p^n$, which will be the common key.
\item
Generate a system of degree-$d$ polynomials $\tilde{f}(\mvec{x}) = (\tilde{f}_1(\mvec{x}),\dots,\tilde{f}_{\ell}(\mvec{x})) \in \mathbb{F}_q[\mvec{x}]^{\ell}$ uniformly at random.
\item
Compute $\mvec{f}(\mvec{x}) := (f_1(\mvec{x}),\dots,f_{\ell}(\mvec{x})) = \tilde{f}(\mvec{x})-\tilde{f}(\mvec{s})$.
\item
Send $\mvec{f}(\mvec{x})$ to Bob.
\end{enumerate}
\item
Bob sends a polynomial map $\mvec{c}(\mvec{x})$ to Alice.
The detail is as follows:
\begin{enumerate}
\item
Randomly generate a polynomial $\psi_j(\mvec{x}) \in \mathbb{F}_q[x_1,\dots,x_j]$ of degree $\deg \psi_j = m$ for each $1 \leq j \leq n$, and set $\mvec{\psi}(\mvec{x}) := (\psi_1(\mvec{x}),\dots,\psi_n(\mvec{x}))$ (see also Section \ref{subsec:proposed_protocol__polynomial_map}).
%
%
\item
Generate a polynomial map $\mvec{r}(\mvec{x}) = (r_1(\mvec{x}), \dots, r_n(\mvec{x})) \in \mathbb{F}_q[\mvec{x}]^n$ of degree $\deg r_j = m-d$ ($1 \leq j \leq n$) uniformly at random.
\item
Choose $(t_1,\dots,t_n) \in \{1,\dots,\ell\}^n$ uniformly at random.
\item
Compute $c_i(\mvec{x}) = \psi_i(\mvec{x}) + f_{t_i}(\mvec{x})r_i(\mvec{x})$ for each $1 \leq i \leq n$.
\item
Send $\mvec{c}(\mvec{x})$ to Alice.
\end{enumerate}
\item
Alice computes $\mvec{c}(\mvec{s}) = \mvec{u}$ and sends $\mvec{u}$ to Bob.
\item
Bob computes the common key $\mvec{s}$.
The detail is as follows:
\begin{enumerate}
\item
Compute the set $\mvec{\psi}^{-1}(\mvec{u}) \cap \mathbb{Z}_p^n$.
\item
If $\#\mvec{\psi}^{-1}(\mvec{u}) \cap \mathbb{Z}_p^n = 1$, then keep the $\mvec{s} \in \mvec{\psi}^{-1}(\mvec{u}) \cap \mathbb{Z}_p^n$ as the common key and halt.
\item
If $\#\mvec{\psi}^{-1}(\mvec{u}) \cap \mathbb{Z}_p^n \neq 1$, then compute all elements of $S = \{ \mvec{s} \in \mvec{\psi}^{-1}(\mvec{u}) \cap \mathbb{Z}_p^n \mid \mvec{f}(\mvec{s}) = 0 \}$.
\item
If $\#S = 1$, then keep the element of $S$ as the common key; otherwise, restart from Step \ref{item:proposed_protocol_Alice}.
\end{enumerate}
\end{enumerate}
\begin{figure}[t!]
  \centering
  \begin{tabular}{|lll|}
  \hline
  Alice      &         & Bob        \\ \hline
  $\mvec{s} \stackrel{r}{\leftarrow} \mathbb{Z}_p^n$&         &            \\
  $\mvec{f} \stackrel{r}{\leftarrow} \Lambda_{n,d}^{\ell}$&         &\\
  $\mvec{f}(\mvec{s}) = 0$ & $\xrightarrow{\mvec{f}}$&  $\psi_j \stackrel{r}{\leftarrow} \Lambda_{j,m}$, $\mvec{\psi} := (\psi_1,\dots,\psi_n)$\\
             &         & $\mvec{r} \stackrel{r}{\leftarrow} \Lambda_{n,m-d}^n$\\
             &         & $t_i \stackrel{r}{\leftarrow} \{1,\dots,\ell\}$\\
             &$\xleftarrow{\mvec{c}}$& $c_i:=\psi_i+f_{t_i}r_i$\\
  $\mvec{u}:=\mvec{c}(\mvec{s})$  &         &\\
  &$\xrightarrow{\mvec{u}}$&  $\mvec{s}\in \mvec{\psi}^{-1}(\mvec{u}) \cap \mathbb{Z}_p^n$\\
  &         & s.t.\ $\mvec{f}(\mvec{s})=0$\\\hline
\end{tabular}
\caption{Our proposed protocol}
\label{fig:proposed_protocol}
\end{figure}

\subsection{Construction of Easy-to-Invert Polynomials}
\label{subsec:proposed_protocol__polynomial_map}

In our proposed protocol, the total efficiency depends highly on the efficiency of generating the polynomial map $\mvec{\psi}$ and computing the inverse $\mvec{\psi}^{-1}$.
Therefore, it is important to use polynomial maps $\mvec{\psi}$ that can be efficiently generated and whose preimage can be efficiently computed.
Here we adopt polynomial systems where the number of variables in each polynomial is gradually incremented, such as in Gr\"{o}bner basis of a zero-dimensional ideal about lexicographical order (see an example below).
For such a polynomial system, its preimage can be recursively computed by solving univariate polynomial equations and then substituting the solutions to the remaining polynomials.
In our proposed protocol, we only need solutions belonging to $\mathbb{Z}_p^n$, therefore we have to only keep the solutions of the univariate equations belonging to $\mathbb{Z}_p$, and we can prune some branches when a solution in $\mathbb{Z}_p$ does not exist.
This reduces the computational cost drastically, in contrast to the original protocol where we needed to solve univariate equations $d^n$ times.

\begin{example}
Here we give a toy example of our polynomial systems in the case of degree two with three variables over coefficient field $\mathbb{F}_5$:
\begin{eqnarray*}
  \left\{
    \begin{array}{l}
  \psi_1(\underline{x}) = 3{x_1}^2 + x_1 +  4 \\
  \psi_2(\underline{x}) = {x_2}^2 + 2{x_1}{x_2} + 4 x_1 + x_2 + 3  \\
  \psi_3(\underline{x}) = 4{x_1}^2 + 2{x_3}^2 + {x_1}{x_3} + 3{x_2}{x_3} + 1
\end{array}
\right.
\end{eqnarray*}
\end{example}

\subsection{Theoretical Estimate of Failure Probability}
\label{subsec:error_bound_analysis}

Here, for parameters $(q,p,n,m,d,\ell)$, we estimate the failure probability of our proposed protocol.
Here the \lq\lq failure\rq\rq{} means the case where two or more candidates remain after the computation by Bob to determine the common key (i.e., the protocol is restarted at the final step).
In order to estimate the failure probability, we analyze the expected number of candidates for the common key computed by Bob.
Recall that Bob's computation at the step consists of the following two steps:
\begin{enumerate}
\item \label{error_analysis_step1}
Computing the preimage of $\mvec{\psi}$ in $\mathbb{Z}_p^n$.
\item \label{error_analysis_step2}
From the candidates obtained at Step \ref{error_analysis_step1}, excluding ones that do not satisfy the condition $\mvec{f} = 0$.
\end{enumerate}
We divide the argument into the two steps above.
From now, we focus on the case $m = 2$ and $d = 1$ which are our proposed parameters.

\paragraph{For Step \ref{error_analysis_step1}.}

Recall that now $\psi_1$ is a quadratic polynomial in $x_1$, $\psi_2$ is a quadratic polynomial in $x_1,x_2$, and so on, and $\psi_n$ is a quadratic polynomial in $x_1,\dots,x_n$.
Consequently, we can solve the system of equations by recursively solving univariate quadratic equations.
We represent this process by using a rooted tree structure from $0$-th level (root) to $n$-th level, where a node at $k$-th level corresponds to a partial solution $(s_1,\dots,s_k)$ obtained from the first $k$ polynomials $\psi_1,\dots,\psi_k$.
Hence the nodes at $n$-th level represent the candidates for the common key computed by Bob.
Now for each node $(s_1,\dots,s_k)$ at $k$-th level ($0 \leq k \leq n - 1$), there are the following three possibilities;
\begin{itemize}
\item
the node has no child nodes (that is, the quadratic equation $\psi_{k+1}(s_1,\dots,s_k,x_{k+1})$ has no solution $x_{k+1} \in \mathbb{Z}_p$);
\item
the node has only one child node (that is, the quadratic equation $\psi_{k+1}(s_1,\dots,s_k,x_{k+1})$ has a unique solution $x_{k+1} \in \mathbb{Z}_p$);
\item
the node has two child nodes (that is, the quadratic equation $\psi_{k+1}(s_1,\dots,s_k,x_{k+1})$ has two different solutions $x_{k+1} \in \mathbb{Z}_p$).
\end{itemize}

Here we note that there is always at least one path from the root to a node at $n$-th level, which corresponds to the \lq\lq correct\rq\rq{} solution chosen by Alice.
We call it the correct path.
From now, we evaluate (by using some heuristic assumptions) an upper bound for the expected number of \lq\lq incorrect\rq\rq{} solutions, that is, paths from the root to a node at $n$-th level different from the correct path.

First we consider the case that the correct path has another branch at $k$-th level ($0 \leq k \leq n - 1$).
We note that the current equation $\psi_{k+1}(s_1,\dots,s_k,x_{k+1})$ has at least one solution in $\mathbb{F}_q$ (which is the correct solution), therefore it has two solutions in $\mathbb{F}_q$.
By heuristically assuming that the other solution is uniformly random over $\mathbb{F}_q$, the probability of branching, i.e., the probability that the other solution is in $\mathbb{Z}_p$ and is different from the correct solution, is at most $p/q$.

Secondly, in a situation where a node, say $v$, at $k$-th level ($1 \leq k \leq n - 1$) that is not on the correct path exists, we evaluate the expected number of child nodes of $v$.
To simplify the argument, here we heuristically assume that the behavior of child nodes of $v$ is independent of the behaviors at the previous levels.
Now it seems not easy for evaluating the probability that the current univariate quadratic equation has a solution in $\mathbb{F}_q$ (note that if it has a solution in $\mathbb{F}_q$, then it has two solutions in $\mathbb{F}_q$ possibly with multiplicity); to derive an upper bound, here we just bound the probability from above by $1$.
We also heuristically assume that now the two solutions distribute independently and uniformly at random over $\mathbb{F}_q$.
Under the assumption, the probability that the node $v$ has a first child node (that is, at least one of the two solutions belongs to $\mathbb{Z}_p$) is given by $1 - (1 - p/q)^2 = 2p / q - (p/q)^2 \leq 2p / q$; and the probability that the node $v$ has the second child node (that is, both of the two solutions belong to $\mathbb{Z}_p$ and these are different) is given by $(p/q) \cdot (p-1)/q \leq (p/q)^2$.
Hence, the expected number of child nodes of $v$ is upper bounded by $\alpha := 2p / q + (p/q)^2$.

By heuristically assuming that the behavior of each level is independent of each other, the expected number of nodes at $n$-th level appearing after branching from the correct path at $k$-th level (now the node $v$ above is at $(k+1)$-th level) is upper bounded by $\alpha^{n-(k+1)}$.
Therefore, the expected number of incorrect solutions is upper bounded by
\[
\sum_{k=0}^{n-1} \frac{ p }{ q } \cdot \alpha^{n-(k+1)}
= \frac{ p }{ q } \cdot \frac{ 1 - \alpha^n }{ 1 - \alpha}
= \frac{ p }{ q } \cdot \frac{ 1 - (2p / q + (p/q)^2)^n }{ 1 - (2p / q + (p/q)^2) } \enspace.
\]

\paragraph{For Step \ref{error_analysis_step2}.}

To simplify the argument, we heuristically assume that the values of $f_i(\mvec{s}')$ ($1 \leq i \leq \ell$) for an incorrect solution $\mvec{s}'$ are uniformly random over $\mathbb{F}_q$ and independent of each other.
Under the assumption, the probability that an incorrect solution $\mvec{s}'$ satisfies that $\mvec{f}(\mvec{s}') = 0$ is $1/q^{\ell}$.

\paragraph{}
Summarizing, by writing the number of incorrect candidates for the common key computed by Bob as $X$, we have
\[
\mathbb{E}[X]
\leq \frac{ p }{ q } \cdot \frac{ 1 - (2p / q + (p/q)^2)^n }{ 1 - (2p / q + (p/q)^2) } \cdot \frac{ 1 }{ q^{\ell} } \enspace.
\]
Our proposed protocol fails if and only if $X \geq 1$, and by Markov's Inequality, its probability is bounded by
\begin{equation}
\label{eq:theoretical_upper_bound}
Pr[X \geq 1]
\leq \mathbb{E}[X]
\leq \frac{ p }{ q } \cdot \frac{ 1 - ( 2p/q + (p/q)^2 )^n }{ 1 - ( 2p/q + (p/q)^2 ) } \cdot \frac{ 1 }{ q^{\ell} } \enspace.
\end{equation}
By substituting our choice of parameters
\[
(q,p,n,m,d,\ell) = (46116646144580573897,19,32,2,1,1)
\]
into the formula above, we obtain an estimated upper bound $8.93 \times 10^{-39} \approx 1.52 \times 2^{-127}$ for the failure probability.

\section{Experimental Results}
\label{sec:experimental_results}

In this section, we explain our experimental results on our proposed protocol.
We used a PC with 8 GB memory and 2 GHz Intel Core i5, and used Magma for implementation.

\subsection{Confirmation of the Theoretical Upper Bound}

In order to confirm that our theoretical upper bound in Eq.\eqref{eq:theoretical_upper_bound} under several heuristic assumptions is not too optimistic, we executed our proposed protocol many times and observed how many times a failure occurs.
Here we used much smaller parameters than our proposed parameter set, as the original parameter yields too small estimated failure probability and therefore it is not feasible to confirm it experimentally.
In detail, the choices of $(p,n,m,d,\ell) = (19,32,2,1,1)$ are the same as our proposed parameter set, while we change the parameter $q$ to make the estimated failure probability fairly high.
We used three choices of $q$ as in Table \ref{tab:experiment_error_probability}, and performed $1000$ protocol executions for each choice of $q$.
The resulting error ratios as well as the estimated error probabilities by using Eq.\eqref{eq:theoretical_upper_bound} are shown in the table.
At least this experimental result does not contradict the theoretical estimate.

\begin{table}[t]
\centering
\caption{Error ratios with various choices of (small) $q$, in $1000$ trials for each $q$}
\label{tab:experiment_error_probability}
\begin{tabular}{|c|c|c|c|} \hline
$(p,n,m,d,\ell)$ & $q$ & error ratio & estimated error prob. \\ \hline
& $53$ & $1.20 \times 10^{-2}$ & $4.36 \times 10^{-2}$ \\ 
$(19,32,2,1,1)$ & $71$ & $8.00 \times 10^{-3}$ & $9.59 \times 10^{-3}$ \\ 
& $97$ & $1.00 \times 10^{-3}$ & $3.54 \times 10^{-3}$ \\ \hline
\end{tabular}
\end{table}

\subsection{Experiments with Parameters Similar to the Modified Protocol}
\label{sec:experimental_results__similar_parameter}

For the sake of comparison, as a parameter set similar to $(q,n,m,d)=(4,25,2,1)$ used in the modified protocol in \cite{Akiyama}, we performed experiments using parameter $(q,p,n,m,d,\ell)=(7,2,32,2,1,1)$.
We executed our protocol $300$ times, and the protocol succeeded $287$ times, therefore the success ratio was about $95.6\%$.
This improves the success ratio $89.9\%$ of the protocol mentioned in Section \ref{sec:previous_protocol__disadvantage}.
We note that the theoretical upper bound in Eq.\eqref{eq:theoretical_upper_bound} of the failure probability with this parameter becomes $0.118$, which indeed bounds the experimental failure ratio $0.044$.

We also performed experiments using another parameter $(q,p,n,m,d,\ell) = (7,4,32,2,1,1)$; in this case, the protocol succeeded only $20$ times among $300$ trials.
The significantly lower success ratio would be caused by the property that now the ratio $p / q$ of the range of the correct solution among the whole coefficient field becomes too large.
In fact, now the theoretical upper bound in Eq.\eqref{eq:theoretical_upper_bound} of the failure probability becomes $3.88 \times 10^5$ which is a meaningless value.

\subsection{Computational Time for Our Protocol}

Table \ref{tab:execution_time} shows the computational times of our protocol with parameter
\[
(q,n,m,d,\ell) = (46116646144580573897,32,2,1,1)
\]
and various choices of $p = 19$, $19^2$, $19^3$.
The execution time increased when $p$ becomes larger, but the change of execution times among these choices of $p$ is not significantly large.
The reason would be that now the ratio $p/q$ is too small (e.g., $p/q = 1.49 \times 10^{-16}$ when $p = 19^3$) to affect the number of nodes in the tree (that is, the total number of equations to be solved).
\begin{table}[t!]
  \centering
  \caption{Computational times for our proposed protocol with $p = 19$, $19^2$, and $19^3$}
  \label{tab:execution_time}
  \begin{tabular}{|c|c|c|} \hline
      $p$& total time for $1000$ executions (s) & average time (s) \\ \hline 
      $19$ & $71.460$ & $7.15 \times 10^{-2}$\\
      $19^2$ & $72.800$ & $7.28 \times 10^{-2}$\\
      $19^3$ & $239.370$ & $2.39 \times 10^{-1}$\\ \hline
  \end{tabular}
\end{table}

\subsection{Relation between Failure Ratios and Parameter $\ell$}

One of the main difference of our proposed protocol from the previous protocol is that now we may use $\ell \geq 1$ conditions $\mvec{f} = 0$, not only a single condition, to exclude the incorrect common keys.
Table \ref{tab:probability_with_various_l} shows the numbers of success (among $1000$ trials) in our experiments with two parameter sets $(q,p,n,m,d) = (5,2,32,2,1)$ and $(3,2,32,2,1)$ and various choices of $\ell = 1,\dots,5$.
The result shows that the failure ratio decreases when $\ell$ increases, which is consistent with the theoretical estimate given in Section \ref{subsec:error_bound_analysis}.
\begin{table}[t!]
\centering
\caption{Numbers of success with various choices of $\ell$ ($1000$ trials for each parameter)}
\label{tab:probability_with_various_l}
  \begin{tabular}{|c|c|c|} \hline
    $(q,p,n,m,d)$ & $(5,2,32,2,1)$ & $(3,2,32,2,1)$\\ \hline 
    $\ell = 1$ & $806$ &  $26$\\
    $\ell = 2$ & $950$ & $186$\\
    $\ell = 3$ & $993$ & $552$\\ 
    $\ell = 4$ & $996$ & $802$\\
    $\ell = 5$ & $999$ & $945$\\ \hline
  \end{tabular}
\end{table}

\subsection{Degree of Regularity}
\label{sec:experimental_result__degree_of_regularity}

Here we explain our experiments about the degree of regularity used in Section \ref{sec:security_evaluation__Grobner_basis}.
If we consider to solve the system of equations $\mvec{f}(\mvec{x}) = 0$ and $\mvec{c}(\mvec{x}) = \mvec{u}$ in $\mathbb{F}_q^n$ instead of $\mathbb{Z}_p^n$, the corresponding ideal is generated by $\mvec{f}(\mvec{x})$, $c_1(\mvec{x}) - u_1, \dots, c_n(\mvec{x}) - u_n$.
In order to add the constraint that the solution has to be found in $\mathbb{Z}_p^n$, we introduce the following polynomial system $\mvec{h}$:
\begin{eqnarray*}
h_i(\mvec{x}) = \prod_{\gamma = 0}^{p-1} (x_i - \gamma) \mbox{ ($1 \leq i \leq n$)} \enspace.
\end{eqnarray*}
Now the set of solutions in $\mathbb{Z}_p^n$ corresponds to the ideal $\mathcal{I}$ generated by $\mvec{f}(\mvec{x})$, $c_1(\mvec{x}) - u_1, \dots, c_n(\mvec{x}) - u_n$, and $\mvec{h}(\mvec{x})$.
We fix parameters $(q,m,d,\ell) = (46116646144580573897,2,1,1)$, and calculated the degree of regularity for the cases $2 \leq n \leq 10$ and $p \in \{2,3,4,19\}$.
The result was that the degree of regularity of the ideal $\mathcal{I}$ is always $d_{reg} = n + 1$.
Due to the result, we can expect that the degree of regularity of any ideal of this type would be $d_{reg} = n + 1$ regardless of whether $p$ is larger than $n$ or smaller than $n$; this observation will be applied in Section \ref{sec:security_evaluation__Grobner_basis} to the case $p = 19$ and $n = 32$.

\section{Comparison with Previous Work}
\label{sec:comparison}

\subsection{Failure Probabilities of Protocols}

As shown in Section \ref{subsec:error_bound_analysis}, the failure probability of our proposed protocol with the proposed parameter set is of the order of $10^{-39}$, which is significantly lower than the experimentally derived failure ratio $10.1\%$ of the modified protocol in \cite{Akiyama}.
We emphasize that this improvement is not just due to different choices of parameters; the argument in Section \ref{sec:experimental_results__similar_parameter} showed that our proposed protocol also improves the modified protocol in \cite{Akiyama} even with the choice of similar parameters.

For a theoretical estimate of the error probability in the modified protocol in \cite{Akiyama}, we roughly assume that for any $\mvec{s} \in \mvec{\psi}^{-1}(\mvec{u})$ that is not the correct common key, the vector $g^{-1}(\mvec{s})$ distributes uniformly at random, and hence $f(g^{-1}(\mvec{s}))$ is also uniformly random.
Under the assumption, by Bob's step to check whether $f(g^{-1}(\mvec{s})) = 0$ or not, the $\mvec{s}$ accidentally passes the check with probability $1/q$.
Therefore, assuming further that the size of $\mvec{\psi}^{-1}(\mvec{u})$ is bounded by a small constant, the error probability of the modified protocol in \cite{Akiyama} will be $\approx 1/q$.
This estimate seems to be consistent to the experimental result in Table 4 of \cite{Akiyama}; see also the description before Table 4 of \cite{Akiyama}.
In contrast, for our theoretical estimate of the error probability in Eq.\eqref{eq:theoretical_upper_bound} with $\ell = 1$, when $q$ is much larger than $p$ and $n$ is fairly large, ignoring the term $(2p/q + (p/q)^2)^n$ gives an approximation
\[
\frac{ p }{ q } \cdot \frac{ 1 }{ 1 - (2p/q + (p/q)^2) } \cdot \frac{ 1 }{ q }
= \frac{ p }{ q^2 - (2pq + p^2) }
\approx \frac{ 1 }{ q } \cdot \frac{ p }{ q - 2p } \enspace.
\]
This suggests that the error probability of our proposed protocol will be significantly lower than the modified protocol in \cite{Akiyama} if the order of the parameter $q$ is similar and $p$ is much smaller than $q$ (as in our proposed parameter).

\subsection{Necessary Numbers of Communication Rounds}
\label{sec:comparison__rounds}

Here we suppose that the \lq\lq practical failure probability\rq\rq{} of key exchange protocols is $2^{-120} \approx 7.52 \times 10^{-37}$, taken from the failure probability with a parameter set LightSaber-KEM of SABER \cite{SABER} which is one of the Round 3 Finalists of NIST PQC standardization.
The failure probability of one execution of the modified protocol in \cite{Akiyama} is $0.101$, therefore $37$ trials are needed to achieve the overall failure probability $2^{-120}$.
One protocol execution uses two communication rounds between Alice and Bob, therefore the total number of communication rounds is $2 \times 37 = 74$.
In contrast, our proposed protocol with the proposed parameter set already achieves failure probability $1.52 \times 2^{-127} < 2^{-120}$, therefore the required number of communication rounds is just two.

\subsection{Communication Complexity}

We compare the amount of communication bits between Alice and Bob, for our proposed protocol and the modified protocol in \cite{Akiyama}.
Here we express a polynomial as a vector of its coefficients; a polynomial of degree $d$ with $n$ variables is represented by a vector of dimension $\sum_{k=0}^{d} \binom{k+n-1}{k}$.
Also, we suppose that an element of $\mathbb{F}_q$ is represented by $\log_{2}q$ bits.

In the modified protocol in \cite{Akiyama}, the communicated objects are $f = \mvec{f}$, $\mvec{g}$, $\mvec{c}$, and $\mvec{u}$.
Here $f$ is a single degree-$d$ polynomial with $n$ variables over $\mathbb{F}_q$; $\mvec{g}$ consists of $n$ degree-$1$ polynomials with $n$ variables (each having $n + 1$ coefficients) over $\mathbb{F}_q$; $\mvec{c}$ consists of $n$ degree-$m$ polynomials with $n$ variables over $\mathbb{F}_q$; and $\mvec{u}$ is an $n$-dimensional vector over $\mathbb{F}_q$.
Therefore, the numbers of communicated elements of $\mathbb{F}_q$ are as in the left column of Table \ref{tab:communication_complexity}.
\begin{table}[t!]
\centering
\caption{Comparison of communication complexity}
\label{tab:communication_complexity}
  \begin{tabular}{|c|c|c|} \hline
     & protocol in \cite{Akiyama} ($\times \log_2 q$ bits) & our protocol ($\times \log_2 q$ bits) \\ \hline 
    $\mvec{f}$ & $\sum_{k=0}^{d} \binom{k+n-1}{k}$ & $\ell \cdot \sum_{k=0}^{d} \binom{k+n-1}{k}$ \\
    $\mvec{g}$ & $n^2+n$ & --- \\
    $\mvec{c}$ & $n \cdot \sum_{k=0}^{m} \binom{k+n-1}{k}$ & $n \cdot \sum_{k=0}^{m} \binom{k+n-1}{k}$ \\ 
    $\mvec{u}$ & $n$ & $n$\\
    total & $\sum_{k=0}^{d} \binom{k+n-1}{k} + n \cdot \sum_{k=0}^{m} \binom{k+n-1}{k} + n^2 + 2n$ & $\ell \cdot \sum_{k=0}^{d} \binom{k+n-1}{k} + n \cdot \sum_{k=0}^{m} \binom{k+n-1}{k} + n$ \\ \hline
  \end{tabular}
\end{table}

In our proposed protocol, the changes from the modified protocol in \cite{Akiyama} are the following two points; the number of polynomials in $\mvec{f}$ becomes $\ell$ instead of one; and $\mvec{g}$ is not communicated.
Therefore, the numbers of communicated elements of $\mathbb{F}_q$ are as in the right column of Table \ref{tab:communication_complexity}.

For a parameter set $(q,n,m,d) = (9,50,2,1)$ used in the original paper \cite{Akiyama}, the communication complexity of one execution of the modified protocol in \cite{Akiyama} is $2.19 \times 10^5$ bits.
As the protocol in \cite{Akiyama} needs $37$ trials to achieve the practical failure probability (see Section \ref{sec:comparison__rounds}), the total communication complexity becomes $8.10 \times 10^6$ bits.
In contrast, our proposed protocol with the proposed parameter
\[
(q,p,n,m,d,\ell) = (46116646144580573897,19,32,2,1,1)
\]
achieves the practical failure probability by only one execution, and the communication complexity is $1.18 \times 10^6$ bits.
Hence our proposed protocol improves the communication complexity compared to the modified protocol in \cite{Akiyama}.

\section{Security Evaluation}
\label{sec:security_evaluation}

In this section, we analyze the security of our proposed protocol.

\subsection{Constrained PME Problem}
\label{subsec:constrained_PME_problem}

Here we define a computational problem named constrained PME problem, which is a modification of PME problem described in Section \ref{subsec:PME_problem} and is a base of the security of our proposed protocol.

\begin{problem}
[constrained PME problem]
For $(f_1(\mvec{x}), \dots, f_{\ell}(\mvec{x}), c_1(\mvec{x}), \dots, c_n(\mvec{x})) \in \mathbb{F}_q[\mvec{x}]^{n + \ell}$ and $(u_1, \dots, u_n) \in \mathbb{F}_q^n$, constrained PME problem is a problem of finding a solution to the system of multivariate polynomial equations that belongs to $\mathbb{Z}_p^n$:
\begin{eqnarray*}
  \begin{cases}
    f_1(x_1, \dots, x_n) = 0 \\
    \qquad\vdots\\
    f_{\ell}(x_1, \dots, x_n) = 0 \\
    c_1(x_1, \dots, x_n) = u_1\\
    \qquad\vdots\\
    c_n(x_1, \dots, x_n) =u_n \enspace.
  \end{cases}
\end{eqnarray*}
\end{problem}

\subsection{Security against Key Recovery Attacks}

As well as the original paper \cite{Akiyama}, in this paper, we consider the security against the Key Recovery Attack by Honest Passive Observer (KRA-HPO).
In the scenario, an attacker is supposed to not interfere with the communication between Alice and Bob.
Then we prove that our proposed protocol is secure in this sense by assuming the hardness of constrained PME problem defined above.

To formulate the security, let $\Sigma$ denote our proposed key exchange protocol, as described in Figure \ref{fig:our_protocol_for_security_definition}.
Let $\mathcal{A}$ denote an adversary's attack algorithm.
Then we define the security experiment for KRA-HPO adversary as in Figure \ref{fig:KRA-HPO_experiment}, where $\mathsf{Gen}$ denotes an algorithm to generate the parameters of our proposed protocol.
Now we define the security as follows:
\begin{figure}[t!]
\centering
\begin{tabular}{|l|}
\hline
$\Sigma(q,p,n,m,d,\ell)$              \\ \hline
$\mvec{f}$ $\stackrel{r}{\leftarrow}$ $\mathbb{F}_q[\mvec{x}]^{\ell};$\\
$f(\mvec{s})$ $=$ $0, \mvec{s} \stackrel{r}{\leftarrow} \mathbb{Z}_p^n;$\\
$\mvec{r}$ $\stackrel{r}{\leftarrow}$ $\Lambda_{n, m-d}^{n};$\\
$\mvec{\psi}$ $\stackrel{r}{\leftarrow}$ $\Lambda_{n,m}^n;$\\
$t_i$ $\stackrel{r}{\leftarrow}$ $\{1,\dots,\ell\};$\\
$c_i$ $\leftarrow$ $\psi_i + f_{t_i}\cdot r_i;$\\
$\mvec{u}$ $\leftarrow$ $\mvec{c}(\mvec{s});$\\
Output $(\mvec{s},\mvec{f},\mvec{c},\mvec{u})$\\
\hline
\end{tabular}
\caption{Our proposed key exchange protocol $\Sigma$}
\label{fig:our_protocol_for_security_definition}
\end{figure}
\begin{figure}[t!]
\centering
\begin{tabular}{|l|}
\hline
$\mathsf{Exp}_{\Sigma, \mathcal{A}}^{\mathrm{KRA-HPO}}(\kappa)$        \\ \hline
$(q,p,n,m,d,\ell)\stackrel{r}{\leftarrow} \mathsf{Gen}(1^{\kappa});$\\
$(\mvec{s},\mvec{f},\mvec{c},\mvec{u})\stackrel{r}{\leftarrow}\Sigma(q,p,n,m,d,\ell);$\\
$\mvec{s}{}' \leftarrow\mathcal{A}(\mvec{f},\mvec{c},\mvec{u});$\\
Output $( \mvec{s}, \mvec{s}{}' )$\\
\hline
\end{tabular}
\caption{Security experiment for KRA-HPO adversary $\mathcal{A}$ against protocol $\Sigma$}
\label{fig:KRA-HPO_experiment}
\end{figure}

\begin{definition}
We say that the key exchange protocol $\Sigma$ is KRA-HPO secure when for any probabilistic polynomial-time (PPT) adversary $\mathcal{A}$, its advantage defined below is negligible in the security parameter $\kappa$:
\begin{eqnarray*}
\mathsf{Adv}_{\Sigma, \mathcal{A}}^{\mathrm{KRA-HPO}}(\kappa) := \Pr[\mvec{s} = \mvec{s}{}' \mid (\mvec{s}, \mvec{s}{}' ) \leftarrow \mathsf{Exp}_{\Sigma, \mathcal{A}}^{\mathrm{KRA-HPO}}(\kappa)] \enspace.
\end{eqnarray*}
\end{definition}

We also put the following assumption on the hardness of constrained PME problem:

\begin{assumption}
[hardness of constrained PME problem]
Suppose that $\mvec{f}$, $\mvec{c}$, and $\mvec{u}$ are chosen as in our proposed protocol.
We assume that for any PPT algorithm $\mathcal{B} = \mathcal{B}(1^{\kappa},\mvec{f},\mvec{c},\mvec{u})$, the probability that its output $\mvec{s}{}'$ satisfies that $\mvec{f}(\mvec{s}{}') = 0$ and $\mvec{c}(\mvec{s}{}') = \mvec{u}$ is negligible in $\kappa$.
\end{assumption}

Then we have the following theorem:

\begin{theorem}
Under the assumption on the hardness of constrained PME problem explained above, our proposed protocol $\Sigma$ is KRA-HPO secure.
In more detail, if a PPT adversary $\mathcal{A}$ breaks the security of $\Sigma$ with advantage $\mathsf{Adv}_{\Sigma, \mathcal{A}}^{\mathrm{KRA-HPO}}(\kappa)$, then there exists a PPT algorithm $\mathcal{B}$ solving the constrained PME problem with probability at least $\mathsf{Adv}_{\Sigma, \mathcal{A}}^{\mathrm{KRA-HPO}}(\kappa)$.
\end{theorem}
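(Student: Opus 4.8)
The plan is to give a direct, tightness-preserving reduction: from any KRA-HPO adversary $\mathcal{A}$ against $\Sigma$ we build a PPT algorithm $\mathcal{B}$ for the constrained PME problem that merely forwards its own input to $\mathcal{A}$ and returns $\mathcal{A}$'s output. Concretely, on input $(1^{\kappa},\mvec{f},\mvec{c},\mvec{u})$ — where, by the hardness assumption, $(\mvec{f},\mvec{c},\mvec{u})$ is distributed exactly as produced inside $\Sigma$ — the algorithm $\mathcal{B}$ runs $\mvec{s}{}' \leftarrow \mathcal{A}(\mvec{f},\mvec{c},\mvec{u})$ and outputs $\mvec{s}{}'$. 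Since $\mathcal{A}$ is PPT and $\mathcal{B}$ performs no further work, $\mathcal{B}$ is PPT.

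First I would verify that the correct common key $\mvec{s}$ chosen inside $\Sigma$ is always a valid solution of the constrained PME instance $(\mvec{f},\mvec{c},\mvec{u})$. By the first step of the protocol we have $\mvec{f}(\mvec{x}) = \tilde{f}(\mvec{x}) - \tilde{f}(\mvec{s})$, hence $\mvec{f}(\mvec{s}) = 0$; by the step in which Alice forms $\mvec{u}$ we have $\mvec{u} = \mvec{c}(\mvec{s})$; and again by the first step, $\mvec{s} \in \mathbb{Z}_p^n$. Therefore $\mvec{s}$ simultaneously satisfies $f_1(\mvec{s}) = \cdots = f_{\ell}(\mvec{s}) = 0$ and $c_1(\mvec{s}) = u_1, \dots, c_n(\mvec{s}) = u_n$ while lying in $\mathbb{Z}_p^n$, i.e.\ it is a solution of the constrained PME problem for this instance.

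Next I would relate the success events. Whenever $\mathsf{Exp}_{\Sigma,\mathcal{A}}^{\mathrm{KRA-HPO}}(\kappa)$ outputs $(\mvec{s},\mvec{s}{}')$ with $\mvec{s} = \mvec{s}{}'$, the vector $\mvec{s}{}'$ returned by $\mathcal{B}$ equals $\mvec{s}$ and hence, by the previous paragraph, is a correct solution. Thus the event ``$\mathcal{A}$ wins'' is contained in the event ``$\mathcal{B}$ outputs a solution satisfying $\mvec{f}(\mvec{s}{}') = 0$ and $\mvec{c}(\mvec{s}{}') = \mvec{u}$'', so the probability that $\mathcal{B}$ succeeds is at least $\Pr[\mvec{s} = \mvec{s}{}'] = \mathsf{Adv}_{\Sigma,\mathcal{A}}^{\mathrm{KRA-HPO}}(\kappa)$. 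Combining this with the hardness assumption, which asserts that every PPT $\mathcal{B}$ succeeds only with negligible probability, we conclude that $\mathsf{Adv}_{\Sigma,\mathcal{A}}^{\mathrm{KRA-HPO}}(\kappa)$ is negligible, which is the claimed KRA-HPO security; the quantitative statement (probability at least $\mathsf{Adv}_{\Sigma,\mathcal{A}}^{\mathrm{KRA-HPO}}(\kappa)$) is exactly the inequality just derived.

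The only point requiring care — rather than a genuine obstacle — is distributional matching: one must argue that the instance distribution in the constrained PME hardness assumption coincides with the distribution of $(\mvec{f},\mvec{c},\mvec{u})$ that $\mathcal{A}$ sees in the experiment. This is immediate, because the assumption is phrased precisely as ``$\mvec{f}$, $\mvec{c}$, and $\mvec{u}$ are chosen as in our proposed protocol'', and $\Sigma$ in Figure~\ref{fig:our_protocol_for_security_definition} generates them in exactly that way (with parameters output by $\mathsf{Gen}$); hence $\mathcal{B}$ perfectly simulates $\mathcal{A}$'s view, no statistical-distance loss occurs, and the advantage is preserved exactly rather than up to a constant factor. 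I would state this matching explicitly and otherwise keep the proof to the few lines above.
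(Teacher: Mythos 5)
Your proposal is correct and follows essentially the same reduction as the paper: $\mathcal{B}$ simply runs $\mathcal{A}$ on $(\mvec{f},\mvec{c},\mvec{u})$ and forwards its output, and whenever $\mathcal{A}$ wins the output satisfies $\mvec{f}(\mvec{s}{}')=0$ and $\mvec{c}(\mvec{s}{}')=\mvec{u}$, giving success probability at least $\mathsf{Adv}_{\Sigma,\mathcal{A}}^{\mathrm{KRA-HPO}}(\kappa)$. Your added remarks on the correct key being a valid solution and on distributional matching are implicit in the paper's argument and do not change the route.
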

\begin{proof}
Given an adversary $\mathcal{A}$ as in the statement, we define an algorithm $\mathcal{B}(1^{\kappa},\mvec{f},\mvec{c},\mvec{u})$ as follows: it runs $\mathcal{A}(\mvec{f},\mvec{c},\mvec{u})$ to obtain $\mvec{s}{}'$, and outputs the $\mvec{s}{}'$.
This $\mathcal{B}$ is PPT as well as $\mathcal{A}$.
Now the condition $\mvec{s}{}' = \mvec{s}$ holds in the experiment $\mathsf{Exp}_{\Sigma, \mathcal{A}}^{\mathrm{KRA-HPO}}(\kappa)$ with probability $\mathsf{Adv}_{\Sigma, \mathcal{A}}^{\mathrm{KRA-HPO}}(\kappa)$, and if it happens, then the output of $\mathcal{B}$ also satisfies that $\mvec{s}{}' = \mvec{s}$ and (as $\mvec{f}(\mvec{s}) = 0$ and $\mvec{c}(\mvec{s}) = \mvec{u}$ by the definition) hence $\mvec{f}(\mvec{s}{}') = 0$ and $\mvec{c}(\mvec{s}{}') = \mvec{u}$, i.e., $\mathcal{B}$ solves the constrained PME problem.
Hence the claim holds.
\end{proof}

\subsection{On Solving the Problem Using Gr\"{o}bner Basis}
\label{sec:security_evaluation__Grobner_basis}

In order to confirm our security assumption described above, here we consider to solve the constrained PME problem by computing Gr\"{o}bner basis of the ideal $\mathcal{I}$ generated by $\mvec{f}(\mvec{x})$, $c_1(\mvec{x}) - u_1,\dots,c_n(\mvec{x}) - u_n$, and $\mvec{h}(\mvec{x})$, where the polynomials $\mvec{h}$ are as defined in Section \ref{sec:experimental_result__degree_of_regularity}.
Here we focus on the case $\ell = 1$; choosing larger $\ell$ will decrease the hardness of the problem (see also Section \ref{sec:security_evaluation__exhaustive_search}).
The experimental result described in Section \ref{sec:experimental_result__degree_of_regularity} suggests that the system of polynomial equations $f_1(\mvec{x}) = 0$, $c_1(\mvec{x}) - u_1 = 0,\dots, c_n(\mvec{x}) - u_n = 0$, $\mvec{h}(\mvec{x}) = 0$ would be semi-regular; if the conjecture is true, then we have the degree of regularity $d_{reg} = 33$ for $\mathcal{I}$.
Now when we want to compute a Gr\"{o}bner basis of $\mathcal{I}$ by using $F_5$ algorithm, the computational complexity is estimated (as in Theorem \ref{thm:complexity_of_F5}) as the order of $\binom{n + d_{reg}}{n}^{\omega}$.
By substituting $n = 32$, $d_{reg}= 33$, and by using an estimate $\omega = 2.3$ as in \cite{omega}, the value becomes
\[
\binom{n + d_{reg}}{n}^{\omega}
\approx 4.8 \times 10^{42} \enspace.
\]
This value is significantly larger than $2^{128} \approx 3.4 \times 10^{38}$.
Hence it is expected that our proposed protocol would be secure in the sense of $128$-bit security against this kind of attacks.

\subsection{On Attacks Using Linear Algebra}

Here we consider a kind of attacks to recover the polynomial map $\mvec{\psi}$; if it were possible, then the adversary could compute the set $\mvec{\psi}^{-1}(\mvec{u}) \cap \mathbb{Z}_p^n$ which contains the common key $\mvec{s}$.

For the choice of parameter $\ell = 1$ mainly used in this paper, we note that by the construction of the protocol, the following relation holds:
\begin{eqnarray*}
\psi_i(\underline{x}) + f(\underline{x})r_i(\underline{x}) = c_i(\underline{x}) \mbox{ for } i=1, \dots, n \enspace.
\end{eqnarray*}
Here, for each $i$, the coefficients of $f$ and $c_i$ are known and the coefficients of $\psi_i$ and $r_i$ are not known.
This situation can be regarded as a system of linear equations.
Now the number of equations and the number of unknown coefficients in $\psi_i$ are equal, therefore the dimension of the solution space is equal to the number, say $N$, of unknown coefficients in $r_i$.
As $\deg r_i = m-d$, we have
\[
N = \sum_{k=0}^{m-d} \binom{k+n-1}{k} \enspace.
\]
With our proposed parameters
\[
(q,p,n,m,d,\ell) = (46116646144580573897,19,32,2,1,1) \enspace,
\]
we have $N = n + 1 = 33$ and $q^N = 8.08 \times 10^{648}$.
Therefore the number of candidates for the solution is significantly larger than $2^{128}$ and hence our proposed protocol is secure in the sense of $128$-bit security against this kind of attacks.

\subsection{On the Exhaustive Search with the Help of $\mvec{f}$}
\label{sec:security_evaluation__exhaustive_search}

Considering the exhaustive search for the common key $\mvec{s}$ over the range $\mathbb{Z}_p^n$, when we use parameters $\ell = 1$ and $d = 1$ as above, the number of candidates for $\mvec{s}$ can be reduced by a factor of $1/p$.
Indeed, when the coefficient of $x_i$ in $f$ is non-zero, for each value of $s_1,\dots,s_{i-1},s_{i+1},\dots,s_n$, the condition $f(\mvec{s}) = 0$ yields at most one possible value of $s_i \in \mathbb{Z}_p$.
Due to the observation, when we want to keep $128$-bit security, we have to compare $2^{128}$ with $p^{n-1}$ instead of $p^n$.
With our proposed parameters as above, we have $p^{n-1} = 19^{31} = 4.38 \times 10^{39} > 2^{128} = 3.40 \times 10^{38}$.
Therefore our proposed protocol is secure in the sense of $128$-bit security against this kind of attacks.

When the parameter $\ell$ is increased, there is an advantage that the upper bound of the failure probability decreases as shown in Eq.\eqref{eq:theoretical_upper_bound}.
However, there is also a disadvantage that the larger number of conditions in $\mvec{f}$ restricts the range of the common key further, which may make the exhaustive search easier.

\subsection*{Acknowledgements.}

The authors thank Tsuyoshi Takagi, Momonari Kudo, Hiroki Furue, and Yacheng Wang for their precious advice for this research.
This research was supported by the Ministry of Internal Affairs and Communications SCOPE Grant Number 182103105.


\begin{thebibliography}{99}
\bibitem{Akiyama2}
 K.~Akiyama, Y.~Goto, S.~Okumura, T.~Takagi, K.~Nuida, and G.~Hanaoka: A public-key encryption scheme based on non-linear indeterminate equations. In: Proceedings of SAC 2017, Lecture Notes in Computer Science, vol.10719, pp.215--234, 2018.
 \bibitem{Akiyama}
 K.~Akiyama, S.~Nakamura, M.~Ito, and N.~Hirata-Kohno: A key exchange protocol relying on polynomial maps. International Journal of Mathematics for Industry, vol.11, article no.195003, 2019.
 \bibitem{omega}
 J.~Alman and V.~V.~Williams: A refined laser method and faster matrix multiplication. arXiv:2010.05846, 2020.
 \bibitem{Grobner}
 M.~Bardet, J.~C.~Faug\`{e}re, and B.~Salvy: On the complexity of Gr\"{o}bner basis computation of semi-regular over determined algebraic equations. ICPSS 2004.
 \bibitem{text}
 D.~A.~Cox, J.~Little, D.~O'Shea: Using Algebraic Geometry (Second Edition). Springer, 2005.
 \bibitem{SABER}
 J.-P.~D'Anvers, A.~Karmakar, S.~S.~Roy, F.~Vercauteren, J.~M.~B.~Mera, M.~V.~Beirendonck, and A.~Basso: SABER, \url{https://csrc.nist.gov/Projects/post-quantum-cryptography/round-3-submissions}.
 \bibitem{Diffie}
 W.~Diffie and M.~E.~Hellman: New directions in cryptography. IEEE Transactions on Information Theory, vol.IT-22, no.6, pp.644-654, 1976.
 \bibitem{F4}
J.~C.~Faug\`{e}re: A new efficient algorithm for computing Gr\"{o}bner bases (F4). Journal of Pure and Applied Algebra, vol.139, no.1-3, pp.61-88, 1999. 
\bibitem{F5}
J.~C.~Faug\`{e}re: A new efficient algorithm for computing Gr\"{o}bner bases without reduction to zero (F5). In: Proceedings of ISSAC 2002, pp.75-83, 2002.
\bibitem{RSA1}
 ISO/IEC 18033-2: Information technology -- Security techniques -- Encryption algorithms -- Part 2: Asymmetric Ciphers, 2006.
 \bibitem{elliptic}
 N.~Koblitz: Elliptic curve cryptosystems. Mathematics of Computation, vol.48, pp.203-209, 1987.
 \bibitem{elliptic2}
 V.~S.~Miller: Use of elliptic curves in cryptography. In: Proceedings of CRYPTO 1985, Lecture Notes in Computer Science, vol.218, pp.417-426, 1985.
 \bibitem{RSA2}
 R.~L.~Rivest, A.~Shamir, and L.~M.~Adleman: A method for obtaining digital signatures and public-key cryptosystems. Communications of the ACM, vol.21, no.2, pp.120-126, 1978.
 \bibitem{Shor}
 P.~W.~Shor: Polynomial-time algorithms for prime factorization and discrete logarithms on a quantum computer. In: Proceedings of FOCS 1994, pp.124-134, 1994.
 \bibitem{Yosh}
 H.~Yosh: The key exchange cryptosystem used with higher order Diophantine equations. Int. J. Netw. Sec., Appl.3, pp.43-50, 2011.
\end{thebibliography}
\end{document}